\begin{document}
\mainmatter              % start of a contribution
\title{MathPartner: an artificial intelligence cloud service}
\titlerunning{ MathPartner: an artificial intelligence cloud service}  % abbreviated title (for running head)
%                                     also used for the TOC unless
%                                     \toctitle is used
%
\author{Gennadi Malaschonok\inst{1} \and Alexandr Seliverstov\inst{2}}
\authorrunning{ Gennadi Malaschonok, Alexandr Seliverstov} % abbreviated author list (for running head)
%
%%%% list of authors for the TOC (use if author list has to be modified)
\tocauthor{Gennadi Malaschonok, Alexandr Seliverstov}
\institute{ National University of Kyiv-Mohyla Academy, 04070  Kyiv, Ukraine\\
\email{malaschonok@ukma.edu.ua},\\ WWW home page:
\texttt{https://www.ukma.edu.ua/eng/ }
\and
Institute for Information Transmission Problems,
Russian Academy of Sciences (Kharkevich Institute), \\
Bolshoi Karetny per. 19, 127994 Moscow, Russia }

% ORCID 0000-0002-9698-6374}
% ORCID 0000-0003-4746-6396}

\maketitle              % typeset the title of the contribution

\begin{abstract}
 In a broad sense, artificial intelligence is a service to find a solution to complex intellectual problems. In this sense, the MathPartner service provides artificial intelligence that allows us to formulate questions and receive answers to questions formulated in a mathematical language. For mathematicians and physicists today, such a language is \LaTeX. The MathPartner service uses a dialect of \LaTeX, which is called Mathpar. The service is a cloud-based computer algebra system and provides users with the opportunity to solve many mathematical problems. In this publication, we focus only on a small class of extremum problems, which are widely applied in economics, management, logistics, and in many engineering fields. In particular, we consider the shortest path problem and discuss an algorithm that is based on the tropical mathematics. The ability to work with many types of classical and tropical algebras, which are freely available to users, is an important distinguishing feature of this intelligent tool for symbolic-numerical calculations. We also consider the use of the simplex algorithm for solving optimization problems.
 %, and at the same time we discuss some problems of obtaining an exact solution as well as an approximate solution and its sensitivity to rounding at intermediate computation steps.
\keywords{artificial intelligence, tropical mathematics,  symbolic-numerical calculations, cloud computing, computer algebra, cloud service, Mathpar, MathPartner}
\end{abstract}
 
\section{Introduction} 
 
%%Mathematics Subject Classification (2020)
%%15A80 Max-plus and related algebras
%%90C24 Tropical optimization (e.g., max-plus optimization)
 
Intelligent systems and artificial intelligence should at least have two properties: reflect the real world and interact with a person. 
If we talk about creating a universal intellectual system in the future, then it must have natural science knowledge, and primarily mathematical knowledge.

While humanity is still far from creating an artificial intelligence that would have mathematical knowledge, but such a task should already be formulated as one of the primary tasks. What is the traditional way of using mathematical knowledge? First, the problem is formulated in a mathematical language, and then the solution of the problem is found using known methods and algorithms. Therefore, the first step should be the creation of an intelligent system that is capable of solving traditional mathematical problems, if they are correctly formulated in traditional mathematical language. 

Here are some simple examples. A weighted graph is given, two of its vertices are indicated, and it is required to find the shortest distance between these two vertices. A system of linear algebraic equations is given, a linear form is indicated, and it is required to find a solution to the system that maximizes (or minimizes) this linear form. A system of algebraic equations is given and it is required to find all its solutions in a specific numerical region. A system of differential equations is given and it is required to find all its solutions in a certain class of functions.

Of course, a program that can solve such problems is not artificial intelligence. But such a program is the main component without which it is impossible to create artificial intelligence with mathematical knowledge. Today, the most convenient form is the cloud form. The creation of such a cloud component is a very important step towards the creation of artificial intelligence.

We propose to use MathPartner cloud mathematics for this purpose. It is freely available~\cite{MP}. To get acquainted with its capabilities, you can read the user manual as well as journal articles~\cite{M17,MS21,MS21A}. However, the possibilities of solving extreme problems were very little covered.

Therefore, the purpose of this article is to acquaint the reader with the capabilities that this service provides for solving extreme problems.

To solve extremal problems, you can use both classical and tropical algebras. Note that such algebras are freely available to users and this is one of the important advantages of this service.

Calculations over max-plus as well as min-plus semirings play an important role both in discrete optimization and for the study of systems of equations over the field of complex numbers. These semirings are known as tropical semirings and has been intensively studied in recent decades. In particular, they forms the basis of tropical geometry~\cite{LM20,G20,GP20,VVY21,G22,GR22,NYW22,YC22}. A new branch of linear algebra was also developed~\cite{LRSS19,FG20,DO20,WFSS20,J21,X21,YY22}.

The shortest path problem has many applications in artificial intelligence and operations research. Classical algorithms such as the Dijkstra algorithm are known. This problem is effectively solvable over the min-plus semiring. 

For both linear functional and polyhedron defined by linear functions with integer coefficients, the optimization problem can be reformulated over max-plus or min-plus semirings. Therefore, such problems about polyhedra provide a large set of problems related to tropical mathematics. Although in the MathPartner system, the simplex algorithm is implemented over ordinary fields. The average-case complexity of the algorithm is polynomial, but it is high in the worst case~\cite{S83,VS83}. The simplex algorithm is widely used, although other methods are more efficient in the worst case~\cite{SFK18,CLS20,C21}. Note that the simplex algorithm is sensitive to rounding off at intermediate computation steps~\cite{O88}. For solving feasibility in linear programs with up to two variables per inequality, $n$ variables, and $m$ inequalities, there is an $O(mn^2 \log m)$ algorithm which is derived from the Fourier--Motzkin elimination method~\cite{HN94,SW21}. 

% Computer algebra systems are rapidly developing, allowing us to quickly perform calculations from various areas of mathematics in a single environment. So, cloud computing makes it possible to solve many computational problems much easier and more convenient for the user.  The MathPartner service is a nice example of cloud-based systems. The procedures were developed using object-oriented programming in Java. The MathPartner service uses MathJax to display mathematical notation written in \LaTeX. We shall consider the MathPartner system and the implementation of  some methods of tropical mathematics~\cite{KM14}. So, we cover only a small part of the scope of MathPartner. 

This paper is structured as follows. Section~2 provides basic definitions and theoretical background.
Section~3 describes implementation of some tropical algorithms.
 Section~4 describes implementation of the simplex algorithm. Section~5 concludes the paper.
 
\section{ Basic definitions and theoretical background}
\subsection{The max-plus semiring}

As usual, tropical algebra refers to the max-plus semiring, i.e., the set 
${\Bbb R} \cup\{-\infty\}$ 
equipped with the addition $x\oplus y=\max(x, y)$ as well as the multiplication $x \odot y=x + y$. For all $x$, both identities $\max(-\infty,x)=x$ and $-\infty+x=-\infty$ hold. Every $x\ne-\infty$ has the multiplicative inverse $-x$. So, the algebra is known as a semifield~\cite{LRSS19}.

The matrix multiplication over the max-plus semiring is defined as usual. The multiplication symbol between matrices is omitted. For example,
$$\left(
\begin{array}{cc}
A_{11} & A_{12}\\
A_{21} & A_{22} \\
\end{array}\right)
\left(
\begin{array}{c}
x_{1}\\
x_{2}\\
\end{array}\right)=
\left(
\begin{array}{c}
\max(A_{11}+x_1, A_{12}+x_2) \\
\max(A_{21}+x_1, A_{22}+x_2) \\
\end{array}\right).$$

The $\max()$ operation acts on matrices of the same size entry-wise, i.e., $$\max(A,B)_{jk}=\max(A_{jk},B_{jk}).$$ 

\begin{definition}%[Definition sub head]
Given a $n\times n$ matrix $A$ over the max-plus semiring. Its pseudo-inverse matrix $A^{-}$ is a $n\times n$ matrix with entries
$$A^{-}_{jk}=\left\{
\begin{array}{rl}
-A_{kj}, & A_{kj}\ne -\infty\\
-\infty, & A_{kj}= -\infty\\
\end{array}\right.$$
\end{definition}

Of course, $(A^{-})^{-}=A$. The pseudo-inverse matrix is defined for every square matrix. It can be calculated easier than the inverse matrix over a field. Nevertheless, it allows to solve tropical equations.

Let $\mathrm{diag}(d_1,\dots,d_n)$ denote the matrix with entries
$$D_{jk}=\left\{
\begin{array}{rl}
d_k, & j=k\\
-\infty, & j\ne k\\
\end{array}
\right.$$
and $I=\mathrm{diag}(0,\dots,0)$.

\begin{remark}
If $A=\mathrm{diag}(d_1,\dots,d_n)$, where all $d_k\ne -\infty$, then $AA^{-}=I$, i.e., $A^{-}$ is the inverse matrix.
\end{remark}

\begin{proposition}
For a $n\times n$ matrix $A$ and a column $b$, every column $x\le (b^{-}A)^{-}$ is a particular solution to the inequality $Ax\le b$. Moreover, the upper bound on the computation complexity equals $O(n^2)$ operations over the max-plus semiring. 
\end{proposition}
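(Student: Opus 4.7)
The plan is to unwind the notation of $(b^-A)^-$, identify it with the explicit componentwise formula $\bar x_k=\min_j(b_j-A_{jk})$, and then verify the inequality $Ax\le b$ by a one-line monotonicity argument.

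First I would compute the proposed bound explicitly. Since $b$ is an $n\times 1$ column, $b^-$ is a $1\times n$ row whose $j$-th entry equals $-b_j$ (with the usual convention that it equals $-\infty$ when $b_j=-\infty$). By the definition of max-plus matrix multiplication,
\[
(b^-A)_{1k}=\max_{j}\bigl(-b_j+A_{jk}\bigr),
\]
so, taking the pseudo-inverse once more,
\[
\bar x_k:=\bigl((b^-A)^-\bigr)_{k1}=-\max_{j}\bigl(-b_j+A_{jk}\bigr)=\min_{j}\bigl(b_j-A_{jk}\bigr).
\]
This identifies $\bar x$ as the vector of componentwise minima. The $-\infty$ cases need a brief check: if $b_j=-\infty$, then the $j$-th row of $Ax\le b$ forces $A_{jk}+x_k=-\infty$ for every $k$, and one verifies that the definition of $(\cdot)^-$ is precisely set up so that $\bar x_k$ carries this constraint through. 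These bookkeeping cases are the only point where one has to be careful, and they are routine.

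Next I would prove the implication. Assume $x\le\bar x$ componentwise. For any indices $j,k$, the definition of $\bar x_k$ as a minimum over $j$ gives $\bar x_k\le b_j-A_{jk}$, hence $x_k\le b_j-A_{jk}$, i.e.\ $A_{jk}+x_k\le b_j$. Taking the maximum over $k$ yields $(Ax)_j=\max_k(A_{jk}+x_k)\le b_j$, which is exactly $Ax\le b$.

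Finally the complexity count is immediate: forming $b^-$ costs $O(n)$ operations; the row-times-matrix product $b^-A$ consists of $n$ entries, each a tropical inner product of length $n$, for a total of $O(n^2)$ operations; and applying the pseudo-inverse at the end is another $O(n)$. The main ``obstacle'' is really just the careful handling of $-\infty$ entries when unwinding $(\cdot)^-$; once that is in place the algebraic verification is a two-line monotonicity chain.
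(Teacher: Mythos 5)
Your proof is correct and follows essentially the same route as the paper: unwind $(b^{-}A)^{-}$ entrywise into $\bar x_k=\min_j\bigl(b_j-A_{jk}\bigr)$ and conclude by a monotonicity chain, together with the obvious $O(n^2)$ count for the tropical row-vector--matrix product. (Your side remark about $b_j=-\infty$ is not quite right --- since the convention is $(-\infty)^{-}=-\infty$, the bound $\bar x$ does not actually encode that constraint --- but the paper's proof silently assumes finite entries as well, so this does not distinguish your argument from theirs.)
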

\begin{proof}
If $x \le (b^{-}A)^{-}$, then for every index $k$ the inequality $x_k\le (\max_j(b_j^{-}+A_{jk}))^{-}$ holds. So, $x_k\le \min_j(b_j+A_{jk}^{-})$.
Next, for all $k$ and $j$, $x_k\le b_j+A_{jk}^{-}$ and $A_{jk}+x_k\le b_j$.
So, for every index $j$, $\max_k(A_{jk}+x_k)\le b_j$. Thus, $Ax\le b$.
\end{proof}

\begin{proposition}
For a $n\times n$ matrix $A$ and a column $b$, if there exists a solution to the equation $Ax=b$, then a particular solution can be calculated as $x=(b^{-}A)^{-}$. Moreover, the upper bound on the computation complexity equals $O(n^2)$ operations over the max-plus semiring. 
\end{proposition}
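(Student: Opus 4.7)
The plan is to reuse Proposition~1 in the stronger form that $x^\ast := (b^{-}A)^{-}$ is actually the \emph{greatest} column satisfying $Ax\le b$, and then to squeeze $Ax^\ast$ between $b$ on both sides.

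First, I would reread the proof of Proposition~1 and notice that all inequalities there are in fact equivalences: unwinding the pseudo-inverses, $x_k \le ((b^{-}A)^{-})_k$ says exactly $x_k \le \min_j(b_j - A_{jk})$ (ignoring the $-\infty$ bookkeeping), which is equivalent to $A_{jk}+x_k \le b_j$ for every $j,k$, i.e.\ $Ax\le b$. Thus for every column $y$ with $Ay\le b$ we automatically have $y \le x^\ast$.

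Next, I would invoke monotonicity of max-plus matrix multiplication: since every entry $(Ay)_j = \max_k(A_{jk}+y_k)$ is a componentwise nondecreasing function of $y$, the assumption that there exists $y$ with $Ay=b$ yields $y\le x^\ast$ and therefore
$$b = Ay \le Ax^\ast.$$
But Proposition~1 already gives $Ax^\ast \le b$. Combining the two inequalities entry-wise forces $Ax^\ast=b$, so $x^\ast=(b^{-}A)^{-}$ is itself a solution whenever any solution exists.

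For the complexity bound, I would simply count operations in the formula $(b^{-}A)^{-}$: forming $b^{-}$ costs $n$ sign flips, the row-by-matrix product $b^{-}A$ costs $n$ additions and $n-1$ $\max$ operations per column, hence $O(n^2)$ operations over the max-plus semiring, and the final pseudo-inversion is another $O(n)$. I do not expect any real obstacle here; the only subtle point is articulating the ``maximality'' of $x^\ast$ among sub-solutions of $Ax\le b$, which is essentially already contained in the proof of the previous proposition but needs to be made explicit to justify the sandwich argument.
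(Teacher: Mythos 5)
Your argument is correct, and it is worth noting that the paper itself states this proposition \emph{without} proof (only the preceding inequality proposition is proved), so there is no in-paper argument to compare against; your proof supplies exactly the missing standard residuation argument. The key observation you make explicit --- that the chain of inequalities in the proof of the previous proposition is reversible, so $x^{\ast}=(b^{-}A)^{-}$ is the greatest column with $Ax\le b$ --- together with monotonicity of max-plus matrix multiplication gives the sandwich $b=Ay\le Ax^{\ast}\le b$, which is the classical ``principal solution'' proof (Cuninghame--Green style) and is precisely what the authors presumably had in mind. One caveat: the maximality of $x^{\ast}$ can fail in degenerate cases involving $-\infty$ entries (e.g.\ a column of $A$ identically $-\infty$, or $b_j=-\infty$), because the paper's pseudo-inverse sends $-\infty$ to $-\infty$ rather than to $+\infty$ as in the completed semiring; your phrase ``ignoring the $-\infty$ bookkeeping'' is where this hides. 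Since the paper's own proof of the inequality proposition is equally cavalier about these cases, your proof matches the paper's level of rigor, but a fully airtight statement would either restrict to finite (or at least column-wise non-degenerate) data or work in the completed semiring. The $O(n^2)$ operation count is counted correctly.
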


\begin{remark}
For a sparse matrix $A$, a particular solution to the equation $Ax=b$ can be calculated quicker.
\end{remark}

There are many types of equations because the additive operation is not invertible. An equation of the type $Ax=x$ is  known as the homogeneous Bellman equation. For a column $b$, an inhomogeneous equation is one of the type $\max(Ax,b)=x$.

\begin{definition}
A partial operation $()^{\times}$ is called the closure when both identities hold:
\begin{enumerate}
\item
If $\max(X,Y)=Y$ and both $X^{\times}$ and $Y^{\times}$ exist, then $\max(X^{\times},Y^{\times})=Y^{\times}$.
\item
$X^{\times}=\max(I,XX^{\times})=\max(I,X^{\times}X)$.
\end{enumerate}
\end{definition}

\begin{proposition}
Given both $n \times n$ matrices $A$ and $B$ over max-plus semiring. If the closure $A^{\times}$ exists, then the matrix $X=A^{\times}B$ is a solution to the matrix Bellman equation $\max(AX,B)=X$.
\end{proposition}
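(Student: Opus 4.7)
The plan is a direct computation, substituting $X = A^{\times}B$ into the left-hand side and simplifying using the semiring structure and the defining identity of the closure. First I would invoke associativity of matrix multiplication over the max-plus semiring to write
$$AX = A(A^{\times}B) = (AA^{\times})B.$$
Next I would use the fact that matrix multiplication distributes over entry-wise maximum (this is just the distributivity of the semiring addition $\oplus=\max$ over the multiplication $\odot=+$, lifted to matrices), together with $B = IB$, to obtain
$$\max(AX,B) = \max((AA^{\times})B,\, IB) = \max(AA^{\times},\, I)\,B.$$

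At this point I would apply clause~2 of the definition of the closure to the matrix $A$, which asserts $A^{\times} = \max(I, AA^{\times})$. Substituting this gives
$$\max(AX,B) = A^{\times}B = X,$$
which is precisely the matrix Bellman equation.

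There is no real obstacle here: the whole argument is a one-line rewriting once one justifies that matrices over the max-plus semiring form a semiring (so that associativity and left/right distributivity over entry-wise max hold), and that the identity $I = \mathrm{diag}(0,\dots,0)$ acts as a two-sided multiplicative identity. These are standard and can be verified entry by entry from the definitions given earlier in the excerpt. I would include a one-sentence reminder that distributivity in the semiring means $A\cdot\max(U,V) = \max(AU, AV)$ for matrices of compatible sizes, since that is the only nontrivial structural fact being used.
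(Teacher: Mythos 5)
Your proposal is correct and follows essentially the same route as the paper: the paper's proof is exactly the chain $\max(AX,B)=\max(AA^{\times}B,B)=\max(AA^{\times},I)B=A^{\times}B=X$, relying on the same distributivity-over-$\max$ step and clause~2 of the closure definition. Your added remarks on associativity, $I$ acting as identity, and distributivity merely make explicit what the paper leaves implicit.
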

\begin{proof}
$\max(AX,B)=\max(AA^{\times}B,B)=\max(AA^{\times},I)B=A^{\times}B=X$.
\end{proof}

The closure can be calculated as $A^{\times}=\max(I,A,A^2,\dots)$. For elements of the max-plus semiring, the closure is
$$x^{\times}=\left\{\begin{array}{ll}
0, & x\le 0\\
\mathrm{undefined}, & x>0\\ 
\end{array}\right.$$
Moreover, for $n \times n$ matrices, if the closure exists, then it is equal to finite expression $A^{\times}=\max(I,A,\dots,A^{n-1})$.

The next proposition provides a block-recursive algorithm to calculate the closure of a matrix.

\begin{proposition}  %[\cite{KM14}] 
Given a block matrix over the max-plus semiring
$$A =\left(
\begin{array}{cc}
E & F\\
G & H\\
\end{array}
\right)$$
If both $E^{\times}$ and $D^{\times}$ exist, where $D=\max(H,GE^{\times}F)$, then the closure $A^{\times}$ can be found using closures of the blocks:
$$A^{\times} =\left(
\begin{array}{cc}
E^{\times}\max(I,FD^{\times}GE^{\times}) & E^{\times}FD^{\times}\\
D^{\times}GE^{\times} & D^{\times}\\
\end{array}
\right).$$
\end{proposition}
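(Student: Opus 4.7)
The plan is to check that the proposed block matrix, which I shall call $Y$, satisfies the second closure axiom $Y=\max(I,AY)$. Writing
$$Y=\begin{pmatrix}Y_{11}&Y_{12}\\Y_{21}&Y_{22}\end{pmatrix}$$
conformally with $A$ and expanding the block product, this identity splits into four simultaneous equations:
\begin{align*}
Y_{11}&=\max(I,EY_{11},FY_{21}), & Y_{12}&=\max(EY_{12},FY_{22}),\\
Y_{21}&=\max(GY_{11},HY_{21}), & Y_{22}&=\max(I,GY_{12},HY_{22}).
\end{align*}

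I would eliminate the right column first. The equation for $Y_{12}$ is the matrix Bellman equation $\max(EY_{12},FY_{22})=Y_{12}$, so by the preceding Bellman proposition its solution is $Y_{12}=E^{\times}FY_{22}$. Substituting into the equation for $Y_{22}$ and collecting coefficients with the help of $D=\max(H,GE^{\times}F)$ converts it into $Y_{22}=\max(I,DY_{22})$, whose closure solution is $Y_{22}=D^{\times}$; hence $Y_{12}=E^{\times}FD^{\times}$.

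A symmetric elimination on the left column starts from the first equation, which, viewed as a Bellman equation for $Y_{11}$ with free term $\max(I,FY_{21})$, gives $Y_{11}=E^{\times}\max(I,FY_{21})$. Plugging this into the third equation and distributing multiplication over max yields $Y_{21}=\max(GE^{\times},DY_{21})$, so $Y_{21}=D^{\times}GE^{\times}$. Back-substitution then reproduces $Y_{11}=E^{\times}\max(I,FD^{\times}GE^{\times})$. A direct verification that this $Y$ actually satisfies all four original equations requires, at each step, the identity $\max(I,EE^{\times})=E^{\times}$ (and its analogue for $D$) to absorb the spurious $I$ that appears when multiplication is brought inside a max with the identity.

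The main obstacle is that the Bellman proposition produces \emph{a} solution, not \emph{the} unique one, so the fixed-point manipulation alone does not yet identify $Y$ with $A^{\times}$. I would close this gap by invoking the finite series formula $A^{\times}=\max(I,A,\dots,A^{n-1})$ quoted just before the proposition: partitioning each power $A^k$ into the same $2\times 2$ blocks and grouping the resulting terms by their alternation pattern in $E,F,G,H$ shows that the $(1,1)$-block of the partial sums tends to $E^{\times}\max(I,FD^{\times}GE^{\times})$, and similarly for the other three entries, forcing equality with the block decomposition of $A^{\times}$.
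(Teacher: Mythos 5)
Your core verification is essentially the computation the paper itself performs, only organized differently: the paper proves the defining identity $A^{\times}=\max(I,A^{\times}A)$ by factoring the proposed block matrix into a product of four simpler block matrices and multiplying through, whereas you expand $\max(I,AY)=Y$ into four block equations and settle each one with the absorption identities $\max(I,EE^{\times})=E^{\times}$ and $\max(I,DD^{\times})=D^{\times}$ together with $D=\max(H,GE^{\times}F)$; both routes are the same amount of work and both are correct. Two remarks. First, the ``elimination'' via the Bellman proposition is only heuristic (it produces a candidate, as you note), so the burden of proof rests on the direct check you mention in passing; that check does go through for all four blocks and should be written out, and you should also verify the mirror identity $\max(I,YA)=Y$, since the definition of closure demands both $\max(I,XX^{\times})$ and $\max(I,X^{\times}X)$ --- the paper verifies the $\max(I,A^{\times}A)$ side in detail and states the other is similar, while your sketch addresses only the $\max(I,AY)$ side. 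Second, your worry that a fixed point of the Bellman-type identity need not be unique is well founded (already for a scalar $x=0$, every $y\ge 0$ satisfies $\max(0,x\odot y)=y$, yet $x^{\times}=0$), but the paper does not address it either: its proof consists exactly of verifying the closure identity and stopping. So your final paragraph is an addition beyond the paper's standard, and as written it is only a sketch --- the grouping of the blocks of $A^k$ by alternation patterns in $E,F,G,H$ and the claimed blockwise stabilization to $E^{\times}\max(I,FD^{\times}GE^{\times})$, etc., is precisely the nontrivial combinatorial content, and it would also have to show that existence of $E^{\times}$ and $D^{\times}$ forces the series $\max(I,A,A^2,\dots)$ to stabilize. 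Either carry that argument out in full or drop it and rely, as the paper does, on the direct verification of the two closure identities.
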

\begin{proof} 
Let $-\infty$ denotes the matrix whose entries are equal to $-\infty$.
First of all, it is easy to check that the block expression for $A^{\times}$ has the following factorization: 
$$\left(
\begin{array}{cc}
E^{\times}\max(I, FD^{\times}GE^{\times}) & E^{\times}FD^{\times}\\
D^{\times}GE^{\times} & D^{\times}\\
\end{array}
\right)=
$$
$$
\left(
\begin{array}{cc}
I & E^{\times}F\\
-\infty & I\\
\end{array}
\right)
\left(
\begin{array}{cc}
I & -\infty\\
-\infty & D^{\times}\\
\end{array}
\right)
%$$
%$$
%\times
\left(
\begin{array}{cc}
I & -\infty\\
G & I\\
\end{array}
\right)
\left(
\begin{array}{cc}
E^{\times} & -\infty\\
-\infty & I\\
\end{array}
\right).
$$

Let us prove the identity $A^{\times}=\max(I,A^{\times}A)$.
Consider the product of the first two matrices and matrix $A$:
$$
\left(
\begin{array}{cc}
I & -\infty\\
G & I\\
\end{array}
\right)
\left(
\begin{array}{cc}
E^{\times} & -\infty\\
-\infty & I\\
\end{array}
\right)
\left(
\begin{array}{cc}
E & F\\
G & H\\
\end{array}
\right)
%$$
%$$
=
\left(
\begin{array}{cc}
E^{\times}E & E^{\times}F\\
GE^{\times} & D\\
\end{array}
\right).
$$
It follows from here that
$$
A^{\times}A
%$$
%$$
=
\left(
\begin{array}{cc}
I & E^{\times}F\\
-\infty & I\\
\end{array}
\right)
\left(
\begin{array}{cc}
I & -\infty\\
-\infty & D^{\times}\\
\end{array}
\right)
\left(
\begin{array}{cc}
E^{\times}E & E^{\times}F\\
GE^{\times} & D\\
\end{array}
\right)
$$
 $$
 =\left(
\begin{array}{cc}
\max(E^{\times}E, E^{\times}FD^{\times}GE^{\times}) & E^{\times}FD^{\times}\\
D^{\times}GE^{\times} & D^{\times}D\\
\end{array}
\right).
$$
Therefor 
$$
 \max(I,A^{\times} A)
 =
\max\bigg(
\left(\begin{array}{cc} I
& -\infty\\
-\infty & I
\end{array}
\right), 
%$$ 
%$$
\left(
 \begin{array}{cc}
\max(E^{\times}E, E^{\times}FD^{\times}GE^{\times}) & E^{\times}FD^{\times}\\
D^{\times}GE^{\times} & D^{\times}D\\
\end{array}
\right)
\bigg)
$$
$$=
 \left(
\begin{array}{cc}
E^{\times}\max(I, FD^{\times}GE^{\times}) & E^{\times}FD^{\times}\\
D^{\times}GE^{\times} & D^{\times}\\
\end{array}\right).
$$
The identity $A^{\times}=\max(I,AA^{\times})$ can also be proved similarly.
\end{proof}

\subsection{The min-plus semiring}
In the same way one can consider the min-plus semiring 
${\Bbb R}\cup\{+\infty\}$ with addition $x\oplus y=\min(x,y)$ as well as multiplication $x \odot y=x + y$. For all $x$, both identities $\min(+\infty,x)=x$ and $+\infty+x=+\infty$ hold. Every $x\ne+\infty$ has the multiplicative inverse $-x$. 
Of course, in all definitions $\max()$ is replaced by $\min()$ as well as $-\infty$ is replaced by $+\infty$. For example, the pseudo-inverse matrix $A^{-}$ over the min-plus semiring is a matrix with entries
$$A^{-}_{jk}=\left\{
\begin{array}{rl}
-A_{kj}, & A_{kj}\ne +\infty\\
+\infty, & A_{kj}= +\infty\\
\end{array}\right.$$
The diagonal matrix $\mathrm{diag}(d_1,\dots,d_n)$ has entries
$$D_{jk}=\left\{
\begin{array}{rl}
d_k, & j=k\\
+\infty, & j\ne k\\
\end{array}
\right.$$
The closure is defined as $A^{\times}=\min(I,A,A^2,\dots)$.

Note that the order over the min-plus semiring is reversed.

Both max-plus and min-plus semirings are tangled by the sister identities $(\max(A,B))^{-}=\min(A^{-},B^{-})$ and $(\min(A,B))^{-}=\max(A^{-},B^{-})$. 

The shortest path problem is the problem of finding a path between two vertices in a graph such that the sum of the weights of its constituent edges is minimized. Let the weights be entries of the matrix $A$. If all  weights are nonnegative, then the problem can be reduced to calculating the closure $A^{\times}$ over the min-plus semiring. In fact, it uses only finite sequence of matrices in the equation $A^{\times}=\min(I,A,A^2,\dots)$ because the length of the shortest path is bounded. Of course, if all  weights are nonpositive, then one can find the maximum using  the max-plus semiring, but this formulation of the problem is not usually used. For nonnegative weights, the travelling salesman problem can also be expressed over the max-plus semiring, but its computational complexity seems to be very high because closely related the Hamiltonian path problem is NP-complete.

\subsection{The algorithm for calculating the matrix closure}  

We formulate the algorithm for calculating the matrix closure. 
Let us designate:
\begin{itemize}
\item
$M$ is an input matrix whose size $n$ is a power of $2$;
\item
$\mathtt{ring}$ is an algebra type (one of the tropical semirings or classical fields);
\item
$\mathtt{Split()}$ is procedure for dividing a matrix into four equal blocks;
\item
$\mathtt{Join()}$ is a procedure for joining four blocks into one matrix;
\item
$\mathtt{Inverse()}$ is the matrix inversion procedure over a field;
\item
$I$ is the identity matrix;
\item
both $+$ and $*$ are basic operations over a (semi)ring.
\end{itemize}
 
\begin{verbatim}
 closure(M, ring) {
   if (ring is tropical) 
   then
    if(M.size=1) AND (M=[[a]]) 
    then{b=closure(a);return [[b]]}
    else{   
         (E,F,G,H) = Split(M):  
         S = closure(E,ring);
         B = G*S;
         R4 = closure(H+B*F, ring);
         R3 = R4+B;
         V = S*F ;
         R2 = V*R4;
         R1 = S+V*R3;
         return Join(R1, R2, R3, R4);
    }
   else return Inverse(I-M); 
 }
\end{verbatim}

In contrast to matrices over a field, there is not known fast algorithm for matrix multiplication over tropical semirings similar to the Strassen algorithm~\cite{S69,KS20} because the additive operation is not invertible.
 
Let $\gamma$ and $\beta$ be constants, $3\geq\beta>2$, and let
${\mathcal M}(n)= \gamma n^{\beta} + o(n^{\beta})$ be the number of  arithmetic operations in one $n\times n$ matrix multiplication. We know today algorithm with $\gamma=1$ and  ${\beta}=3$.
 
\begin{proposition}
The  closure algorithm has the complexity of matrix multiplication.
\end{proposition}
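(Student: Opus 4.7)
The plan is to set up a recurrence for the running time $T(n)$ of the closure procedure on an $n\times n$ matrix (with $n$ a power of $2$) and solve it by a standard divide-and-conquer summation, using the hypothesis $\beta>2$ to absorb the cost of matrix additions.

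First I would read off the arithmetic cost of one recursive step from the pseudo-code. On a problem of size $n$, the tropical branch performs exactly two self-calls on blocks of size $n/2$: one to obtain $S=E^{\times}$ and one to obtain $R_4=(H\oplus BF)^{\times}$. The remaining work is a bounded number of block multiplications ($B=GS$, $BF$, $V=SF$, $R_2=VR_4$, the product building $R_3$, and $VR_3$) together with a bounded number of entrywise block additions, each on matrices of size $n/2\times n/2$. Thus the non-recursive work of one step is at most $c_1\,\mathcal{M}(n/2)+c_2(n/2)^2$ for absolute constants $c_1,c_2$. This gives the recurrence
$$T(n)\le 2T(n/2)+c_1\mathcal{M}(n/2)+c_2(n/2)^2.$$

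Next I would eliminate the $(n/2)^2$ term. Since $\beta>2$ and $\mathcal{M}(n)=\gamma n^{\beta}+o(n^{\beta})$, the quadratic cost of matrix addition is $o(\mathcal{M}(n/2))$, so the recurrence reduces to $T(n)\le 2T(n/2)+c\,\mathcal{M}(n/2)$ for some constant $c$. Unrolling over the $\log_2 n$ levels yields
$$T(n)\le\sum_{i=0}^{\log_2 n}2^{i}\cdot c\,\mathcal{M}\!\left(\frac{n}{2^{i+1}}\right)\le\frac{c\gamma'}{2^{\beta}}\,n^{\beta}\sum_{i=0}^{\log_2 n}2^{(1-\beta)i},$$
for some $\gamma'$ slightly larger than $\gamma$ that absorbs the $o(n^{\beta})$ remainder uniformly. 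Because $\beta>2$ forces $1-\beta<-1$, the geometric sum on the right is bounded by a constant independent of $n$, so $T(n)=O(n^{\beta})=O(\mathcal{M}(n))$.

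The main obstacle, and the place to be careful, is the bookkeeping at the recursive step: the closure must be expressed so that only \emph{two} recursive calls appear, not three or four. The block formula from the earlier proposition is written precisely so that $S=E^{\times}$ is computed once and reused in the remaining blocks, and so that the only additional closure required is that of the Schur-complement-like block $D=H\oplus GE^{\times}F$. If a naive implementation were to recompute a closure instead of reusing $S$, the recurrence would become $T(n)=aT(n/2)+O(\mathcal{M}(n/2))$ with $a\ge 3$; then one would need $\beta>\log_2 3$ to preserve the $O(\mathcal{M}(n))$ bound, and for $\beta=3$ (the classical case) one would lose the constant factor. Verifying from the pseudo-code that the two calls to \texttt{closure} are the only nested recursions is therefore the essential step that makes the conclusion tight.
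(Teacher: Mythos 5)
Your proof is correct and follows essentially the same route as the paper: a recurrence $t(n)=2\,t(n/2)+O(\mathcal{M}(n/2))$ coming from the two recursive block closures plus a constant number of $(n/2)\times(n/2)$ products, solved by unrolling into a geometric sum that converges because $\beta>2$ (indeed $\beta>1$ already suffices there). The only difference is one of precision: the paper counts exactly five block multiplications and extracts the leading constant $\sim\frac{5\gamma n^{\beta}}{2^{\beta}-2}$ (hence $\sim\frac{5}{6}n^{3}$ for $\gamma=1$, $\beta=3$), whereas you settle for the $O(\mathcal{M}(n))$ bound while handling the quadratic cost of the entrywise additions more explicitly than the paper does.
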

\begin{proof}
For calculating the closure of the $n\times n$ matrix we have to compute five matrix multiplications and two matrix closure for the $(n/2) \times (n/2)$ blocks.    

So, we get the following recurrent equality for complexity
$$
t(n)=2 t(n/2)+ 5 {\mathcal M}(n/2), t(1)=1.
$$

After summation from $n=2^k$ to $2^1$ we obtain
$$
5 \gamma\left(2^0 2^{\beta(k-1)} + \cdots + 2^{k-1}2^{\beta\cdot 0}+2^{k}\right) t(1) 
$$
$$
= n+\frac{5\gamma n}{2} \sum_{i=0}^{k-1} 2^{i(\beta-1)} = 5 \gamma\frac{n^{\beta}-n}{2^{\beta}-2} + n.
$$
Therefore the complexity of the closure algorithm is 
$
\sim
\frac{5 \gamma n^{\beta}}{2^{\beta}-2}$. For $\gamma=1$ and ${\beta}=3$, we get $\sim \frac 5 6 n^3$.
\end{proof}

\section{Tropical mathematics in MathPartner}
\subsection{Algebras in MathPartner}

Working in the MathPartner computer algebra system involves choosing the algebra over which the calculations are performed. The default algebra is the algebra $R64[x,y,z,t]$ of polynomials in four variables with the monomial order, where inequalities $x<y<z<t$ hold. $R64$ is the set of floating-point 64-bit numbers having $52$ bits for mantissa, $11$ bits for the exponent and one bit for the sign. Instead of floating-point numbers $R64$, one can use integers $Z$, rational numbers $Q$, floating-point numbers of arbitrary precision $R$, complex floating-point numbers $C64$ and $C$, modulo prime residues $Zp$ and $Zp32$, as well as idempotent algebras, including $ZMaxPlus$, $ZMinPlus$, $RMaxPlus$, $RMinPlus$, $R64MaxPlus$, and $R64MinPlus$. The algebra can be chosen with a command like $\mathtt{SPACE=ZMaxPlus[]}$. Many commands only work in the proper algebra.

The additive operation is always denoted by $+$, and the multiplicative operation is denoted by $*$.
 
\begin{example}
Let us run $\mathtt{SPACE = ZMaxPlus[]; 2+3;}$ the output is $3=\max(2,3)$. 
On the other hand, let us run $\mathtt{SPACE = ZMaxPlus[]; 2*3;}$ the output is $5$.
\end{example}

\subsection{Simple equations and inequalities}

For given both matrix $A$ and vector $b$, to find a particular solution to the system of linear algebraic equations $Ax = b$ over a tropical semiring one can run $\backslash\mathtt{solveLAETropic}(A, b)$.

\begin{example} 
Let us consider the matrix
$$A=\left(\begin{array}{cc}1 & 2\\ 3 & 0\end{array}\right)$$
and the column
$$b=\left(\begin{array}{c}5\\7\end{array}\right).$$
Let us run
$$\mathtt{SPACE = ZMaxPlus[]; A = [[1, 2],[3, 0]]; b = [5, 7];}
$$
$$
 \mathtt{ \backslash{solveLAETropic}(A, b);}$$
The output is equal to 
$$\left(\begin{array}{c}4\\3\end{array}\right).$$
\end{example}

In the same way, to solve the system of linear algebraic inequalities one can run $\backslash\mathtt{solveLAITropic}(A, b)$.

\begin{example}
Let us run
$$\mathtt{SPACE = ZMaxPlus[]; A = [[2,0],[3,1]]; b = [1,1];}
$$
$$\mathtt{ \backslash{solveLAITropic}(A, b);}$$
The output is equal to $[[-\infty,-2], [-\infty,0]]$.
\end{example}

\subsection{The closure}

The closure $A^{\times}$ can be calculated by the $\backslash\mathtt{closure}()$ operation. For numbers, if the closure does not exists, then the output is $\infty$.

\begin{example}
Let us run $\mathtt{SPACE = ZMaxPlus[]; \backslash{closure}(1);}$ the output is $\infty$. 
On the other hand, let us run $\mathtt{SPACE = ZMaxPlus[]; \backslash{closure}(-1);}$ the output is $0$.
\end{example}

\begin{example}
Let us calculate the closure of the matrix 
$$A=\left(\begin{array}{cc}
0 & 1 \\
2 & 0 \\
\end{array}\right)$$
over the min-plus semiring.
$$\mathtt{SPACE = ZMinPlus[]; A=[[0,1],[2,0]]; \backslash{closure}(A);}$$
The output is equal to
$$A^{\times}=\left(\begin{array}{cc}
0 & 1 \\
2 & 0 \\
\end{array}\right)$$
\end{example}

\begin{example}
Let us calculate the closure of the matrix 
$$A=\left(\begin{array}{cc}
-1 & -2 \\
-3 & -4 \\
\end{array}\right)$$
over the max-plus semiring.
$$\mathtt{SPACE = ZMaxPlus[]; A=[[-1,-2],[-3,-4]]; \backslash{closure}(A);}$$
The output is equal to
$$A^{\times}=\left(\begin{array}{cc}
0 & -2 \\
-3 & 0 \\
\end{array}\right)$$
\end{example}

\subsection{The Bellman equations and inequalities}

To find a solution to a homogeneous Bellman equation $Ax=x$ one can run $\backslash\mathtt{BellmanEquation}(A)$. For a column $b$, to find a solution to an inhomogeneous Bellman equation one can run $\backslash\mathtt{BellmanEquation}(A,b)$. 

Two commands $\backslash\mathtt{BellmanInequality}(A)$ and $\backslash\mathtt{BellmanInequality}(A,b)$ allows to find a solution to Bellman inequalities $Ax\le x$ and $\max(Ax,b)\le x$, respectively.

\subsection{The shortest path}

Let $A$ be a matrix with nonnegative or infinite entries, where all entries at the leading diagonal vanish, i.e., for all $k$, $x_{kk}=0$. Its entry can be interpreted as the graph edge weight or length. However, the triangle inequality can be violated. If there is no edge between two vertices $k$ and $j$, then $x_{kj}=\infty$. 
To find the shortest path from $k$ to $j$ one can run $\backslash\mathtt{findTheShortestPath}(A, k, j)$ over the min-plus algebra. To calculate the smallest distances for all vertex pairs, one can run $\backslash\mathtt{searchLeastDistances}(A)$.

\begin{example}
Let us consider the graph with the adjacency matrix  
$$A=\left(\begin{array}{ccc}
0 & 1 & \infty \\
1 & 0 & 1 \\
\infty & 1 & 0 \\
\end{array}\right),$$
i.e., the graph is a path. Let us run 
$$\mathtt{SPACE = ZMinPlus[]; A=[[0,1,\infty],[1,0,1],[\infty,1,0]];}
$$
$$\mathtt{ \backslash{searchLeastDistances}(A);}$$
The output is equal to 
$$\left(\begin{array}{ccc}
0 & 1 & 2 \\
1 & 0 & 1 \\
2 & 1 & 0 \\
\end{array}\right)$$
Next, let us run $\backslash\mathtt{findTheShortestPath}(A, 2, 1)$. The output is equal to $[2,1]$, that is, the shortest path from $2$ to $1$ consist of two vertices $2$ and $1$.
\end{example}

\section{Linear programming in MathPartner}
\subsection{Systems of algebraic inequalities}

To calculate the solution to a system of univariate algebraic inequalities one can run $\backslash\mathtt{solve}()$ over a ring of polynomials. 

\begin{example} 
To solve the system of two inequalities $x-6 > 0$ and $x-7 < 0$, let us run 
$$\mathtt{SPACE = Q[x]; b = \backslash {solve}([x-6 > 0, x-7 < 0]);}$$
The output is equal to the interval $(6,7)$. Note that parentheses denote open interval boundaries, and square brackets denote closed interval boundaries.
\end{example}

\subsection{The simplex algorithm}

To solve a linear programming problem, one can execute either the $\backslash\mathtt{SimplexMax}()$ command or its counterpart that is the $\backslash\mathtt{SimplexMin}()$ command. The result is a vector $x$. Calculations should be carried out over either rational or floating-point numbers.

If the objective function $c^Tx$ is to be maximized ($c^Tx \rightarrow \max$) under the conditions $Ax \le b$ and $x\ge 0$, then one can run $\backslash\mathtt{SimplexMax}{(A, b, c)}$.

\begin{example}
Let us maximize the $3x_1+x_2+2x_3$ under the conditions $x_1\ge 0$, $x_2\ge 0$, $x_3\ge 0$, $x_1+x_2+3x_3\le 30$, $2x_1+2x_2+5x_3\le 24$, and $4x_1+x_2+2x_3\le 36$. Let us run
$$\begin{array}{l}
SPACE = R64[];\\
A = [[1, 1, 3], [2, 2, 5], [4, 1, 2]];\\
b = [30, 24, 36];\\
c = [ 3,  1,  2];\\
x = \backslash\mathtt{SimplexMax}(A, b, c);\\
\end{array}$$
The output is $[8, 4, 0]^T.$
\end{example}

Of course, only symbolic calculations over rational numbers without rounding guarantee the answer according to well-known results~\cite{O88}.

If the objective function $c^Tx$ is to be maximized ($c^Tx \rightarrow \max$) under the conditions $A_1x \le b_1$, $A_2 x =  b_2$, and $x \ge 0$, then one can run 
$$\backslash\mathtt{SimplexMax}{(A_1, A_2, b_1, b_2, c)}.$$

If the objective function $c^Tx$ is to be maximized ($c^Tx \rightarrow \max$) under the conditions $A_1x \le b_1$, $A_2 x =  b_2$, $A_3 x \ge  b_3$, and $x \ge 0$, then one can run 
$$\backslash\mathtt{SimplexMax}{(A_1, A_2, A_3, b_1, b_2, b_3, c)}.$$ Some matrices can be empty $()$. If the objective function $c^Tx$ is to be minimized ($c^Tx \rightarrow \min$), then one can run either $\backslash\mathtt{SimplexMin}{(A, b, c)}$ or 
$$\backslash\mathtt{SimplexMin}{(A_1,A_2, b_1, b_2, c)}$$ 
or 
$$\backslash\mathtt{SimplexMin}{(A_1, A_2, A_3, b_1, b_2, b_3, c)},$$ respectively.

\section{Discussion}

We do not give a detailed description of the MathPartner, referring the reader to other sources and manuals, but we dwell in detail on the issue of solving extreme problems.  

But still, we note that MathPartner can solve a large class of classical symbolic and numerical problems in mathematical analysis, algebra, graph theory, the theory of differential equations and much more, including the construction of graphics, the depiction of three-dimensional figures, the animation of images, etc. 

For example, MathPartner can solve many matrix problems for number fields and polynomial rings. It allows you to factor matrices, find the Bruhat decomposition, LSU, QR, SVD, Cholesky decomposition. The list of matrix functions includes calculation of the inverse and adjoint matrix, calculation of the kernel of a linear operator, Moore-Penrose inverse matrix and pseudo-inverse matrix, calculation of the determinant and echelon form of the matrix, calculation of the characteristic polynomial and others.

For detailed information, we recommend user manual http://mathpar.com/ downloads/MathparHandbook\_en.pdf.

\section{Conclusion}

Artificial intelligence, in the form in which it is already available for experimentation today, makes it possible to solve a large class of problems related to text and image processing. But it remains completely helpless when faced with even the simplest problems in the field of mathematics or physics.

We formulate the problem of creating artificial intelligence that can solve mathematical problems and propose to consider the MathPartner service as a possible basis for its creation.

In fact, we can talk about AI, which will have three global levels. 

The first level is text processing and isolating those fragments that formulate a mathematical problem. 

The second level is the reformulation of the problem into language Mathpar. 

And the third level is the solution of the problem using Cloud Mathematics MathPartner. 

When the solution is obtained at the third level, its text and appearance are already familiar to specialists. It will be enough to return it to the first level and add the necessary text environment.

\section{Acknowledgment}

The first author's work was supported by the grants from the Simons Foundation. Awards ID:
1030285 and 1290592. My deepest sympathies go to the family of Jim Simons, a
great man and brilliant mathematician, founder of this foundation, who passed
away this year on May 10.

% In addition, two block-recursive functions have been implemented for calculating the Cholesky decomposition of symmetric positive definite matrices: one function for sparse matrices with the standard multiplication algorithm and another function for dense matrices with multiplication according to the Strassen--Winograd algorithm~\cite{MS22}. Calculation of both LSU and QR decompositions is also based on the block-recursive approach. 
 
%\section{Conclusion}
%
% The MathPartner service allows us to solve new problems in tropical mathematics. In particular, it can be used as a cloud component of a large artificial intelligence system that accesses it to solve optimization problems.

% As a theoretical basis for the method, it is shown that the matrix closure operation over a tropical semiring can be computed using a block-recursive algorithm. We also discuss a generalization of matrix decompositions over tropical semirings.

%
% ---- Bibliography ----
%

\end{document}